\documentclass[12p,final]{elsarticle}
\usepackage[margin=2.5cm]{geometry}
\usepackage{lipsum}

\makeatletter
\def\ps@pprintTitle{%
   \let\@oddhead\@empty
   \let\@evenhead\@empty
   \def\@oddfoot{\reset@font\hfil\thepage\hfil}
   \let\@evenfoot\@oddfoot
}
\makeatother

\usepackage{amsfonts,color,morefloats,pslatex}
\usepackage{amssymb,amsthm, amsmath,latexsym}

\newtheorem{theorem}{Theorem}
\newtheorem{lemma}[theorem]{Lemma}

\newcommand{\tr}{{\mathrm{Tr}}}

\newcommand{\Norm}{{\mathrm{N}}}

\newcommand{\gf}{{\mathbb{F}}}
\newcommand{\wt}{{\mathtt{wt}}}

\newcommand{\C}{{\mathcal{C}}}

\newcommand{\bc}{{\mathbf{c}}}

\usepackage{blindtext}

\begin{document}

\begin{frontmatter}



\title{A family of projective two-weight linear codes}

\tnotetext[fn1]{
The research of this paper was supported in part by the National Natural Science Foundation of China under Grant 11901049 and 11971004, in part by the the Natural Science Basic Research Program of Shaanxi under Grant 2020JQ-343, and in part by the  Young Talent Fund of University Association for Science and Technology in Shaanxi, China, under Grant 20200505.
}

\author[heng]{Ziling Heng}
\ead{zilingheng@chd.edu.cn}
\author[heng]{Dexiang Li}
\ead{ lidexiang1015@163.com}
\author[du1,du2]{Jiao Du}
\ead{jiaodudj@126.com}
\author[heng]{Fuling Chen}
\ead{chenfuling1109@163.com}

\address[heng]{School of Science, Chang'an University, Xi'an 710064, China}
\address[du1]{School of Mathematics and Information Science, Henan Normal University, Xinxiang 453007, China}
\address[du2]{Henan Engineering Laboratory for Big Data Statistical Analysis and Optimal Control, Henan Normal University, Xinxiang 453007, China}




\begin{abstract}
Projective two-weight linear codes are closely related to finite projective spaces and strongly regular graphs. In this paper, a family of $q$-ary projective two-weight linear codes is presented, where $q$ is a power of 2. The parameters of both the codes and their duals are excellent. As applications, the codes are used to derive strongly regular graphs with new parameters and secret sharing schemes with interesting access structures.
\end{abstract}

\begin{keyword}
Linear code \sep strongly regular graph \sep secret sharing schemes

\MSC  94B05 \sep 05E30 \sep 15A03

\end{keyword}

\end{frontmatter}


\section{Introduction}
An $[n,k,d]$ \emph{linear code} $\C$ of length $n$ over $\gf_q$ is a $k$-dimensional linear subspace of $\gf_q^n$ with minimal Hamming distance $d$, where $\gf_q$ denotes the finite field with $q$ elements for a prime power $q$. There exist some bounds on the parameters $n,k,d$ \cite{MS77}. An $[n,k,d]$ code is called (distance) optimal if no $[n,k,d+1]$ codes exists. An $[n,k,d-1]$ code is called almost (distance) optimal if $[n,k,d]$ code is optimal. The \emph{dual} code $\C^{\perp}$ of a linear code $\C$ is defined as $\C^{\perp}=\left\{\textbf{c}^{\perp}\in \gf_q^n: \langle\textbf{c}^{\perp},\textbf{c}\rangle=0\ \forall\ \textbf{c}\in \C\right\}$. Clearly, the dimension of $\C^{\perp}$ is $n-k$ if $\C$ has dimension $k$. A linear code $\C$ is said to be \emph{projective} if the minimal distance of its dual is at least 3.

Let $A_i$ denote the number of codewords with weight $i$ in a code $\C$ of length $n$. The sequence $(1,A_1,A_2,\cdots,A_n)$ is called the \emph{weight distribution} of $\C$ and the polynomial $1+A_1z+A_2z^2+\cdots+A_nz^n$ is called the \emph{weight enumerator} of $\C$.  Denote by $t=\sharp \{A_i:A_i>0,\ 1\leq i\leq n\}$. Then a code with weight distribution $(1,A_1,A_2,\cdots,A_n)$ is called a $t$-weight code. Weight distribution is an interesting research subject in coding theory as it not only contains the information of the capabilities of error detection and correction, but also allows the computation of the error probability of error detection and correction of a given code. Many linear codes with only a few weights were reported in the literature \cite{C, D15, DD, D72, HDZ, HWW, SA, TLQZ, TQH, YZ, YD, ZB, Z}. In particular, projective two-weight codes are very precious as they are closely related to finite projective spaces, strongly regular graphs and  combinatorial designs \cite{C, Dingbook18, D72}. However, projective two-weight codes are rare and only a few families are known \cite{C, D15, HY}.

Let $D=\{d_1,d_2,\cdots,d_n\}\subseteq \gf_{q^m}$ be a nonempty set. Let $\tr_{q^m/q}$ be the trace function from $\gf_{q^m}$ to $\gf_q$. Define a linear code of length $n$ as
\begin{eqnarray}\label{defn-code}\C_D=\{(\tr_{q^m/q}(bd_1), \tr_{q^m/q}(bd_2),\cdots, \tr_{q^m/q}(bd_n)):b\in \gf_{q^m}\}.\end{eqnarray}
The set $D$ is called the \emph{defining set}  and the ordering of the elements in $D$ doesn't affect the parameters and weight distribution of $\C_D$. Recently, many good codes were obtained with different proper defining sets \cite{D15, DD, HWW, HY, LB, TLQZ, TQH, YZ, ZB, Z}.

Let $s$ be a divisor of a positive integer $m$, $\Norm_{q^m/q^{s}}$ the norm function from $\gf_{q^m}^*$ to $\gf_{q^{s}}^*$. In this paper, we study the $q$-ary linear code $\C_D$ in (\ref{defn-code}) with the defining set
$$D=\left\{x\in \gf_{q^m}^*:\tr_{q^{s}/q}(\Norm_{q^m/q^{s}}(x))+c=0\right\},\ c\in \gf_q.$$
Known results on the code $\C_D$ are summarized as follows:
\begin{enumerate}
\item If $q$ is an odd prime, $c=0$ and $m=2s$, the code $\C_D$ is a two-weight code and its complete weight enumerator was given in \cite{LB}.
\item If $c=0$ and $s=2$, the code $\C_D$ is a two-weight code and its weight enumerator was given in \cite{HY}.
\item If $c\in \gf_q^*,s=2$ and $\gcd(m/2,q-1)=1$, the code $\C_D$ is a two-weight code and its weight enumerator was given in \cite{HY}.
\end{enumerate}
The objective of this paper is to investigate the weight distribution of $\C_D$ for $m=2s$, any prime power $q$ and any $c\in \gf_q$. The parameters of its dual are also determined. It is interesting that $\C_D$  is a projective two-weight code if $m=2s$, $q$ is even and $c\in \gf_q^*$. As an application, strongly regular graphs with new parameters can be derived with the projective two-weight codes. As another application, all codes obtained in this paper can be used to construct secret sharing schemes with interesting access schemes.

\section{Preliminaries}
Let $p$ be a prime and $q=p^e$. Denote by $\zeta_p$ the primitive $p$-th root of complex unity. For any $a\in \gf_q$, the function
$\varphi_{a}(x)=\zeta_{p}^{\tr_{q/p}(ax)},\ x\in \gf_q$, is called  an \emph{additive character} of $\gf_q$. If $a=0$, $\varphi_0(x)=1$ for all $x\in \gf_q$ and $\varphi_0$ is called the trivial additive character of $\gf_q$. If $a=1$, $\varphi_1$ is said to be the canonical additive character of $\gf_q$. The orthogonality  relation of additive characters (see \cite{LN}) is given by
$$\sum_{x\in \gf_q}\varphi_1(ax)=\left\{
\begin{array}{rl}
q    &   \mbox{ for }a=0,\\
0    &   \mbox{ for }a\in \gf_q^*.
\end{array} \right. $$
Let $\gf_q^*=\gf_q\setminus \{0\}$ and $\alpha$ be a primitive element of it. Define the \emph{multiplicative characters} of $\gf_q$ by
$\psi_{j}(\alpha^k)=\zeta_{q-1}^{jk}\mbox{ for }k=0,1,\cdots,q-1$, where $0\leq j \leq q-2$. In particular, $\psi_0$ is called the trivial multiplicative character and $\eta:=\psi_{(q-1)/2}$ is referred to as the quadratic multiplicative character of  $\gf_q$. The orthogonality relation of multiplicative characters (see \cite{LN})  is given by
$$\sum_{x\in \gf_q^*}\psi_j(x)=\left\{
\begin{array}{rl}
q-1    &   \mbox{ for }j=0,\\
0    &   \mbox{ for }j\neq 0.
\end{array} \right. $$

For an additive character $\varphi$ and a multiplicative character $\psi$ of $\gf_q$, the \emph{Gauss sum} $G(\psi, \varphi)$ over $\gf_q$ is defined by
$$G(\psi,\varphi)=\sum_{x\in \gf_q^*}\psi(x)\varphi(x).$$
We call $G(\eta,\varphi)$ the quadratic Gauss sum over $\gf_q$ for nontrivial $\varphi$.

\begin{lemma}\label{quadGuasssum}\cite[Th. 5.15]{LN}
Let $q=p^e$ with $p$ odd. Let $\chi$ be the canonical additive character of $\gf_q$. Then
\begin{eqnarray*}G(\eta,\varphi)&=&(-1)^{e-1}(\sqrt{-1})^{(\frac{p-1}{2})^2e}\sqrt{q}\\
 &=&\left\{
\begin{array}{lll}
(-1)^{e-1}\sqrt{q}    &   \mbox{ for }p\equiv 1\pmod{4},\\
(-1)^{e-1}(\sqrt{-1})^{e}\sqrt{q}    &   \mbox{ for }p\equiv 3\pmod{4}.
\end{array} \right. \end{eqnarray*}
\end{lemma}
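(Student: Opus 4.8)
\emph{Overall plan.} I would split the claim into the easy determination of the \emph{magnitude} of $G(\eta,\chi)$ and the delicate determination of its \emph{argument}, and reduce the latter to the classical quadratic Gauss sum over the prime field $\gf_p$ via the Davenport--Hasse lifting relation. Write $G:=G(\eta,\chi)$. Using $\overline{\chi(x)}=\chi(-x)$ and $\eta^{-1}=\eta$ one gets $\overline{G}=\eta(-1)\,G$, hence $|G|^2=G\overline{G}=\eta(-1)\,G^2$; on the other hand, expanding $|G|^2=\sum_{x,y\in\gf_q^*}\eta(xy^{-1})\chi(x-y)$, substituting $x=ty$, and summing over $y$ by the orthogonality relation for additive characters gives $|G|^2=q$. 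Since $\eta(-1)=(-1)^{(q-1)/2}\in\{\pm1\}$, this yields $G^2=(-1)^{(q-1)/2}q$, so $G=\varepsilon\sqrt q$ for some fourth root of unity $\varepsilon$; everything comes down to pinning down $\varepsilon$.

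\emph{Descent to the prime field.} The quadratic character of $\gf_q^*$ factors as $\eta=\eta_p\circ\Norm_{q/p}$ with $\eta_p$ the quadratic character of $\gf_p$, and the canonical additive character of $\gf_q$ is $\chi_p\circ\tr_{q/p}$ with $\chi_p$ canonical on $\gf_p$; thus $\eta$ and $\chi$ are exactly the lifts of $\eta_p$ and $\chi_p$ from $\gf_p$ to $\gf_q$. The Davenport--Hasse relation then gives
$$G(\eta,\chi)=(-1)^{e-1}\bigl(G(\eta_p,\chi_p)\bigr)^{e},$$
so it suffices to evaluate the classical Gauss sum $g_p:=G(\eta_p,\chi_p)=\sum_{x=1}^{p-1}\bigl(\tfrac{x}{p}\bigr)\zeta_p^{\,x}$.

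\emph{The prime case --- the hard step.} Here the genuine difficulty (the point Gauss famously left open) is to show that $g_p=\sqrt p$ when $p\equiv1\pmod4$ and $g_p=\sqrt{-1}\,\sqrt p$ when $p\equiv3\pmod4$; the squaring identity already gives $g_p^2=(-1)^{(p-1)/2}p$, so only the \emph{sign} is at stake. I would obtain it by the Schur/Dirichlet method: let $F=(\zeta_p^{\,jk})_{0\le j,k\le p-1}$ be the finite Fourier matrix; a direct computation gives $F^2=p\,\Pi$ with $\Pi$ the involution $j\mapsto-j$, hence $F^4=p^2 I$, so every eigenvalue of $F$ lies in $\{\pm\sqrt p,\ \pm\sqrt{-1}\,\sqrt p\}$ with multiplicities constrained by the trace and rank of $\Pi$ (which depend only on $p\bmod4$); comparing $\mathrm{Tr}(F)=\sum_{j}\zeta_p^{\,j^2}=g_p$ and the Vandermonde determinant $\det F$ (whose sign is fixed by a standard pairing argument) with this eigenvalue data determines $\varepsilon$ for $F$, and hence $g_p$. (Equivalently one may invoke Dirichlet's Poisson-summation evaluation of $\sum_x e^{2\pi i x^2/p}$.) Substituting back into the Davenport--Hasse formula: for $p\equiv1\pmod4$, $G(\eta,\chi)=(-1)^{e-1}(\sqrt p)^e=(-1)^{e-1}\sqrt q$; for $p\equiv3\pmod4$, $G(\eta,\chi)=(-1)^{e-1}(\sqrt{-1}\,\sqrt p)^e=(-1)^{e-1}(\sqrt{-1})^e\sqrt q$.

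\emph{Assembling the closed form.} Finally one checks that $\bigl(\tfrac{p-1}{2}\bigr)^2\equiv0\pmod4$ when $p\equiv1\pmod4$ and $\bigl(\tfrac{p-1}{2}\bigr)^2\equiv1\pmod4$ when $p\equiv3\pmod4$, so the two cases above collapse to the single expression $(-1)^{e-1}(\sqrt{-1})^{(\frac{p-1}{2})^2 e}\sqrt q$, which is the stated formula. The only real obstacle is the sign determination in the prime case; the lifting step and the $p\bmod4$ bookkeeping are routine.
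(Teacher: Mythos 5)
This lemma is imported verbatim from the cited reference (Lidl--Niederreiter, Th.\ 5.15); the paper itself contains no proof, so there is nothing in-paper to compare against. Your argument is correct and is essentially the classical one. The magnitude computation ($|G|^2=q$ via the substitution $x=ty$ and additive orthogonality, combined with $\overline{G}=\eta(-1)G$ to get $G^2=\eta(-1)q$), the identification $\eta=\eta_p\circ\Norm_{q/p}$ and $\chi=\chi_p\circ\tr_{q/p}$ so that the Davenport--Hasse relation yields $G(\eta,\chi)=(-1)^{e-1}G(\eta_p,\chi_p)^e$, and the final congruence check $\bigl(\tfrac{p-1}{2}\bigr)^2\equiv 0$ or $1\pmod 4$ according as $p\equiv 1$ or $3\pmod 4$ are all sound. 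The one genuinely nontrivial step, the sign of the prime-field Gauss sum $g_p=\sum_j\zeta_p^{j^2}$, is only sketched, but the Schur eigenvalue argument you outline ($F^4=p^2I$, multiplicities pinned down by $\mathrm{Tr}(F^2)=p$ and the sign of the Vandermonde determinant $\det F$) is a complete and standard proof once the bookkeeping is written out. The only stylistic difference from the reference's own treatment is that you descend to $\gf_p$ first and run the Fourier-matrix argument there, whereas one can equally run the analogous matrix argument directly on the $q\times q$ matrix $\bigl(\chi(c_ic_j)\bigr)$ over $\gf_q$ and avoid Davenport--Hasse altogether; your route has the advantage of isolating the classical case $q=p$, at the cost of needing the lifting relation as an extra ingredient.
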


\begin{lemma}\label{lem-charactersum}\cite[Th. 5.33]{LN}
Let $\varphi$ be a nontrivial additive character of $\gf_q$ with $q$ odd, and let $f(x)=a_2x^2+a_1x+a_0\in \gf_q[x]$ with $a_2\neq 0$. Then
$$\sum_{c\in \gf_q}\varphi(f(c))=\varphi(a_0-a_1^2(4a_2)^{-1})\eta(a_2)G(\eta,\varphi).$$
\end{lemma}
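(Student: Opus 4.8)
The plan is to reduce the sum to a quadratic Gauss sum by completing the square, and then to evaluate the resulting quadratic exponential sum $\sum_{y\in\gf_q}\varphi(a_2y^2)$ by counting solutions of $y^2=z$.

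First, since $q$ is odd, $2a_2$ is a unit in $\gf_q$, so I would complete the square and write
$$f(c)=a_2\left(c+\frac{a_1}{2a_2}\right)^2+\left(a_0-\frac{a_1^2}{4a_2}\right).$$
Consequently
$$\sum_{c\in\gf_q}\varphi(f(c))=\varphi\!\left(a_0-\frac{a_1^2}{4a_2}\right)\sum_{c\in\gf_q}\varphi\!\left(a_2\left(c+\frac{a_1}{2a_2}\right)^2\right),$$
and substituting $y=c+a_1/(2a_2)$, which is a bijection of $\gf_q$, the inner sum becomes $S:=\sum_{y\in\gf_q}\varphi(a_2y^2)$.

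Next I would evaluate $S$. For $z\in\gf_q^*$ the number of $y\in\gf_q$ with $y^2=z$ equals $1+\eta(z)$ (it is $2$ when $z$ is a nonzero square and $0$ otherwise), while $y^2=0$ has the single solution $y=0$. Collecting the terms of $S$ according to the value $z=y^2$,
$$S=1+\sum_{z\in\gf_q^*}\bigl(1+\eta(z)\bigr)\varphi(a_2z)=1+\sum_{z\in\gf_q^*}\varphi(a_2z)+\sum_{z\in\gf_q^*}\eta(z)\varphi(a_2z).$$
Since $a_2\neq0$ and $\varphi$ is nontrivial, the orthogonality relation for additive characters gives $\sum_{z\in\gf_q^*}\varphi(a_2z)=-1$, so the first two terms cancel. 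In the remaining sum I put $w=a_2z$ and use that $\eta$ is multiplicative and takes values in $\{\pm1\}$, hence $\eta(z)=\eta(a_2^{-1}w)=\eta(a_2)\eta(w)$; this gives $\sum_{z\in\gf_q^*}\eta(z)\varphi(a_2z)=\eta(a_2)\sum_{w\in\gf_q^*}\eta(w)\varphi(w)=\eta(a_2)G(\eta,\varphi)$. Therefore $S=\eta(a_2)G(\eta,\varphi)$, and substituting this into the previous display yields $\sum_{c\in\gf_q}\varphi(f(c))=\varphi(a_0-a_1^2(4a_2)^{-1})\eta(a_2)G(\eta,\varphi)$, which is the assertion.

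Every step here is elementary. The only points requiring attention are the square-root count $\#\{y\in\gf_q:y^2=z\}=1+\eta(z)$ — exactly where the hypothesis that $q$ is odd and the quadratic character $\eta$ enter — together with the bookkeeping showing that the constant and linear contributions to $S$ cancel. I do not expect a genuine obstacle.
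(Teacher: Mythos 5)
Your proof is correct: completing the square, reducing to $S=\sum_{y\in\gf_q}\varphi(a_2y^2)$, and evaluating $S$ via the square-root count $1+\eta(z)$ is exactly the standard argument, and every step (the cancellation of the constant and linear terms, the identity $\eta(a_2^{-1})=\eta(a_2)$) checks out. The paper itself gives no proof — it cites this as Theorem 5.33 of Lidl--Niederreiter — and your argument is essentially the textbook one, so there is nothing further to compare.
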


\begin{lemma}\label{lem-charactersum-evenq}\cite[Cor. 5.35]{LN}
Let $\varphi_b$ be a nontrivial additive character of $\gf_q$ with $b\in \gf_q^*$, and let $f(x)=a_2x^2+a_1x+a_0\in \gf_q[x]$ with $q$ even. Then
$$\sum_{c\in \gf(q)}\varphi_b(f(c))=\left\{\begin{array}{ll}
\chi_b(a_0)q    &   \mbox{ if }a_2=ba_{1}^{2},\\
0    &   \mbox{ otherwise. }
\end{array} \right.$$
\end{lemma}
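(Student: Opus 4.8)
The plan is to exploit the special behaviour of the squaring map in characteristic~$2$. Since $q$ is a power of $2$, the Frobenius map $\sigma\colon y\mapsto y^{2}$ is a field automorphism of $\gf_q$, hence a bijection (indeed $\sigma^{-1}=\sigma^{e-1}$ because $y^{q}=y$), and the absolute trace is Frobenius-invariant: from $\tr_{q/2}(y)=y+y^{2}+\cdots+y^{2^{e-1}}$ and $y^{q}=y$ one gets $\tr_{q/2}(y^{2})=\tr_{q/2}(y)$ for every $y\in\gf_q$. Write $\sqrt{z}:=\sigma^{-1}(z)$ for the unique square root of $z$ in $\gf_q$. The point is that these two facts let us linearise the quadratic term $a_2x^{2}$ inside the trace.

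First I would expand, for $c\in\gf_q$,
$$\varphi_b(f(c))=\zeta_{2}^{\tr_{q/2}\left(ba_2c^{2}+ba_1c+ba_0\right)}.$$
Since $ba_2c^{2}=(\sqrt{ba_2}\,c)^{2}$, trace-invariance gives $\tr_{q/2}(ba_2c^{2})=\tr_{q/2}(\sqrt{ba_2}\,c)$, so the exponent collapses to the linear form $\tr_{q/2}\bigl((\sqrt{ba_2}+ba_1)c\bigr)+\tr_{q/2}(ba_0)$. Pulling out the constant factor $\zeta_{2}^{\tr_{q/2}(ba_0)}=\varphi_b(a_0)$ yields
$$\sum_{c\in\gf_q}\varphi_b(f(c))=\varphi_b(a_0)\sum_{c\in\gf_q}\varphi_1\bigl((\sqrt{ba_2}+ba_1)c\bigr).$$

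Second, I would apply the orthogonality relation for additive characters recalled above: the inner sum equals $q$ when $\sqrt{ba_2}+ba_1=0$ and $0$ otherwise. Finally, squaring the identity $\sqrt{ba_2}=ba_1$ and dividing by $b\in\gf_q^{*}$ converts the vanishing condition into $a_2=ba_1^{2}$, giving exactly the claimed case distinction (here the $\chi_b$ written in the statement is the additive character $\varphi_b$). I do not expect any genuine obstacle: the only thing that must be used correctly is the characteristic-$2$ fact that $\tr_{q/2}\circ\sigma=\tr_{q/2}$ together with the invertibility of $\sigma$, after which the whole computation is a one-line reduction to character orthogonality; note in particular that the argument requires no hypothesis $a_2\neq0$, in contrast to the odd-$q$ companion Lemma~\ref{lem-charactersum}.
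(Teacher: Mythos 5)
Your proof is correct. The paper does not prove this lemma at all --- it simply cites it as \cite[Cor.~5.35]{LN} --- and your argument (linearising $ba_2c^2$ via the characteristic-$2$ identity $\tr_{q/2}(y^2)=\tr_{q/2}(y)$, reducing to orthogonality of additive characters, and translating $\sqrt{ba_2}=ba_1$ into $a_2=ba_1^2$) is precisely the standard derivation given in that reference; your side remarks that $\chi_b$ in the statement is just $\varphi_b$ and that no hypothesis $a_2\neq 0$ is needed are both accurate.
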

\section{A class of $q$-ary two-weight linear codes}
Let $s$ be a positive integer and $m=2s$. Let $q$ be a power of a prime $p$. In this section, we study the code $\C_D$ in (\ref{defn-code}) with the defining set
$$D=\left\{x\in \gf_{q^m}^*:\tr_{q^s/q}(x^{q^s+1})+c=0 \right\}\ \text{ for }c\in \gf_q.$$
Denote by  $n=\sharp D$. The following lemma is useful for determining the length of $\C_{D}$.

\begin{lemma}\label{lem-sums}\cite[Lemma 7]{HWD}
For $a\in \gf_{q^s}^*, b \in \gf_{q^m},c\in\gf_{q}$ and $m=2s$,
define an exponential sum by
$$\Delta(a,b,c)=\sum_{y\in \gf_{q}^*}\varphi(yc)\sum_{x\in \gf_{q^m}^*}\varphi \left(y\left( \tr_{q^s/q}(ax^{q^s+1})+\tr_{q^m/q}(bx)\right)\right),$$
where $\varphi$ is the canonical additive character of $\gf_{q}$. Then
\begin{eqnarray*}
\Delta(a,b,c)=
\begin{cases}
(q^s+1)(1-q) & \text{ if $c=0,\tr_{q^s/q}\left(\frac{b^{q^s+1}}{a}\right)=0$},\\
q^s+1-q & \text{ if $c=0,\tr_{q^s/q}\left(\frac{b^{q^s+1}}{a}\right)\neq0$},\\
q^s+1 & \substack{\mbox{ if }c\neq0, \ \tr_{q^s/q}\left(\frac{b^{q^s+1}}{a}\right)\neq0, \ c\neq\tr_{q^s/q}\left(\frac{b^{q^s+1}}{a}\right), \mbox{ or } c\neq0, \ \tr_{q^s/q}\left(\frac{b^{q^s+1}}{a}\right)=0,}\\
q^s+1-q^{s+1} & \text{ if $c=\tr_{q^s/q}\left(\frac{b^{q^s+1}}{a}\right)\neq0$}.
\end{cases}
\end{eqnarray*}
\end{lemma}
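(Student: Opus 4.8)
The plan is to evaluate, for each fixed $y\in\gf_q^*$, the inner sum
$$S(y):=\sum_{x\in\gf_{q^m}^*}\varphi\!\left(y\,\tr_{q^s/q}(ax^{q^s+1})+y\,\tr_{q^m/q}(bx)\right),$$
and then to compute $\Delta(a,b,c)=\sum_{y\in\gf_q^*}\varphi(yc)\,S(y)$ by orthogonality of additive characters of $\gf_q$. Let $\chi_s$ and $\chi_m$ denote the canonical additive characters of $\gf_{q^s}$ and $\gf_{q^m}$. Because $y\in\gf_q$ can be pulled inside the inner traces and trace is transitive, $\varphi\big(y\,\tr_{q^s/q}(ax^{q^s+1})\big)=\chi_s(y a x^{q^s+1})$ and $\varphi\big(y\,\tr_{q^m/q}(bx)\big)=\chi_m(ybx)=\chi_s\big(ybx+yb^{q^s}x^{q^s}\big)$, the last step using $\chi_m(u)=\chi_s\big(\tr_{q^m/q^s}(u)\big)$ and $\tr_{q^m/q^s}(u)=u+u^{q^s}$. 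Adjoining the (unit) term at $x=0$,
$$S(y)=-1+\sum_{x\in\gf_{q^m}}\chi_s\!\left(t\,x\,x^{q^s}+wx+w^{q^s}x^{q^s}\right),\qquad t:=ya\in\gf_{q^s}^{*},\ \ w:=yb .$$

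The key point is that $x\mapsto x^{q^s}$ is the nontrivial automorphism of $\gf_{q^m}$ over $\gf_{q^s}$, so $x\,x^{q^s}=\Norm_{q^m/q^s}(x)$ and the exponent above is a Hermitian ``norm-plus-linear'' form. Since $t=ya\in\gf_{q^s}$ satisfies $t^{q^s}=t$, the purely algebraic identity
$$t\,x\,x^{q^s}+wx+w^{q^s}x^{q^s}=t\Big(x+\frac{w^{q^s}}{t}\Big)\Big(x+\frac{w^{q^s}}{t}\Big)^{q^s}-\frac{w^{q^s+1}}{t}$$
holds in every characteristic, so after the change of variable $z=x+w^{q^s}/t$ (a bijection of $\gf_{q^m}$, under which $z^{q^s}=x^{q^s}+w/t$) we obtain $S(y)=-1+\chi_s(-w^{q^s+1}/t)\sum_{z\in\gf_{q^m}}\chi_s(t\,z^{q^s+1})$. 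The norm map $\gf_{q^m}^{*}\to\gf_{q^s}^{*}$ is $(q^s+1)$-to-one and $t\neq0$, so by orthogonality of additive characters of $\gf_{q^s}$, $\sum_{z\in\gf_{q^m}}\chi_s(t\,z^{q^s+1})=1+(q^s+1)\sum_{v\in\gf_{q^s}^{*}}\chi_s(tv)=1-(q^s+1)=-q^s$. Also $w^{q^s+1}/t=y\,b^{q^s+1}/a$, so with $T:=\tr_{q^s/q}(b^{q^s+1}/a)\in\gf_q$ we get $\chi_s(-w^{q^s+1}/t)=\varphi(-yT)$, hence $S(y)=-q^s\varphi(-yT)-1$.

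It then remains to evaluate $\Delta(a,b,c)=\sum_{y\in\gf_q^{*}}\varphi(yc)\big(-q^s\varphi(-yT)-1\big)=-q^s\sum_{y\in\gf_q^{*}}\varphi\big(y(c-T)\big)-\sum_{y\in\gf_q^{*}}\varphi(yc)$. By orthogonality each sum $\sum_{y\in\gf_q^{*}}\varphi(y\gamma)$ equals $q-1$ if $\gamma=0$ and $-1$ otherwise. Applying this with $\gamma=c-T$ and $\gamma=c$ and distinguishing the cases $(c=0,\,T=0)$, $(c=0,\,T\neq0)$, $(c\neq0,\,T=c)$ and $(c\neq0,\,T\neq c)$ — the last being exactly the union of the situations ``$T\neq0$ and $c\neq T$'' and ``$T=0$, $c\neq0$'' in the statement — yields $\Delta=-q^s(q-1)-(q-1)=(q^s+1)(1-q)$, then $\Delta=q^s-(q-1)=q^s+1-q$, then $\Delta=-q^s(q-1)+1=q^s+1-q^{s+1}$, and finally $\Delta=q^s+1$, which are the four claimed values.

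The only genuinely delicate step is the completing-the-square manipulation: one must verify that the translation $z=x+w^{q^s}/t$ is compatible with the conjugation $x\mapsto x^{q^s}$, i.e.\ that $z^{q^s}=x^{q^s}+w/t$, and this is precisely where $t=ya\in\gf_{q^s}$ (rather than merely $t\in\gf_{q^m}$) is essential. Everything else is routine bookkeeping — tracking the $x=0$ term and invoking the two orthogonality relations for additive characters.
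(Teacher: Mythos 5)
Your proof is correct: the completing-the-square identity for the Hermitian form $t\,xx^{q^s}+wx+w^{q^s}x^{q^s}$ is valid in every characteristic (no division by $2$ occurs), the norm map is indeed $(q^s+1)$-to-one so $\sum_{z\in\gf_{q^m}}\chi_s(tz^{q^s+1})=-q^s$, and the final case analysis on $\sum_{y\in\gf_q^*}\varphi(y(c-T))$ and $\sum_{y\in\gf_q^*}\varphi(yc)$ reproduces all four claimed values (including the degenerate case $b=0$, where $T=0$). Note that the paper does not prove this lemma at all --- it is quoted verbatim from \cite[Lemma 7]{HWD} --- but your argument is exactly the standard one, and it is the same reduction (complete the square, exploit that $x\mapsto x^{q^s+1}$ is the relative norm, then apply orthogonality twice) that the authors themselves carry out in the proof of Lemma \ref{value-scb} for the special case $a=1$ with an extra summation variable.
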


\begin{lemma}\label{length}
The length $n$ of $\C_{D}$ is given by
\begin{eqnarray*}
n=\left\{\begin{array}{ll}
(q^s+1)(q^{s-1}-1) & \text{ if }c=0,\\
q^{s-1}(q^s+1) & \text{ if }c\in \gf_q^*.
\end{array}\right.
\end{eqnarray*}
\end{lemma}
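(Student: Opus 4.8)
The plan is to count $n=\sharp D$ by expressing the indicator of the condition $\tr_{q^s/q}(x^{q^s+1})+c=0$ via additive characters of $\gf_q$ and then invoking Lemma~\ref{lem-sums} with the specialization $a=1$, $b=0$. Concretely, for $x\in\gf_{q^m}^*$ we write
\begin{equation*}
\frac{1}{q}\sum_{y\in\gf_q}\varphi\!\left(y\bigl(\tr_{q^s/q}(x^{q^s+1})+c\bigr)\right)
=\begin{cases}1 & \text{if }\tr_{q^s/q}(x^{q^s+1})+c=0,\\ 0 & \text{otherwise.}\end{cases}
\end{equation*}
Summing over $x\in\gf_{q^m}^*$ and separating the $y=0$ term gives
\begin{equation*}
n=\frac{1}{q}\left((q^m-1)+\sum_{y\in\gf_q^*}\varphi(yc)\sum_{x\in\gf_{q^m}^*}\varphi\!\left(y\,\tr_{q^s/q}(x^{q^s+1})\right)\right)
=\frac{(q^m-1)+\Delta(1,0,c)}{q},
\end{equation*}
where $\Delta(1,0,c)$ is exactly the sum in Lemma~\ref{lem-sums} evaluated at $a=1$ and $b=0$.

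Next I would plug $b=0$ into the case distinction of Lemma~\ref{lem-sums}. Since $b=0$ forces $\tr_{q^s/q}(b^{q^s+1}/a)=\tr_{q^s/q}(0)=0$, only two of the four cases can occur: the first case applies when $c=0$, yielding $\Delta(1,0,0)=(q^s+1)(1-q)$; the third case applies when $c\neq 0$ (this is the branch ``$c\neq 0,\ \tr_{q^s/q}(b^{q^s+1}/a)=0$''), yielding $\Delta(1,0,c)=q^s+1$. With $q^m-1=q^{2s}-1$, a short simplification finishes both cases:
\begin{equation*}
c=0:\quad n=\frac{q^{2s}-1+(q^s+1)(1-q)}{q}=\frac{(q^s+1)(q^s-1)-(q^s+1)(q-1)}{q}=\frac{(q^s+1)(q^s-q)}{q}=(q^s+1)(q^{s-1}-1),
\end{equation*}
\begin{equation*}
c\neq 0:\quad n=\frac{q^{2s}-1+q^s+1}{q}=\frac{q^s(q^s+1)}{q}=q^{s-1}(q^s+1).
\end{equation*}

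There is essentially no hard part here: once the character-sum identity reduces $n$ to $\Delta(1,0,c)$, the result is immediate from Lemma~\ref{lem-sums}. The only points requiring a little care are (i) verifying that the canonical additive character $\varphi$ of $\gf_q$ used here is the same $\varphi$ appearing in Lemma~\ref{lem-sums}, so that the specialization is legitimate; and (ii) checking that with $b=0$ we land in the correct sub-branch of the third case of Lemma~\ref{lem-sums} rather than in the second or fourth cases, which require $\tr_{q^s/q}(b^{q^s+1}/a)\neq 0$. Both are routine, so the proof is short.
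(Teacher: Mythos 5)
Your proposal is correct and follows essentially the same route as the paper: express $n$ via the orthogonality relation of additive characters, reduce to $\frac{q^m-1}{q}+\frac{1}{q}\Delta(1,0,c)$, and read off the two relevant branches of Lemma~\ref{lem-sums} with $b=0$. The arithmetic in both cases checks out.
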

\begin{proof}
By the orthogonal relation of additive characters and Lemma \ref{lem-sums},
\begin{eqnarray*}
n=\frac{1}{q}\sum_{x\in \gf_{q^m}^*}\sum_{y\in \gf_q}\zeta_{p}^{\tr_{q/p}\left(y\tr_{q^s/q}(x^{q^s+1})+yc\right)}=\frac{q^m-1}{q}+\frac{1}{q}\sum_{y\in \gf_q^*}\varphi(yc)\sum_{x\in \gf_{q^m}^*}\varphi\left(y\tr_{q^s/q}(x^{q^s+1})\right)=\frac{q^m-1}{q}+\frac{1}{q}\Delta(1,0,c).
\end{eqnarray*}
Then the desired conclusion follows from Lemma \ref{lem-sums}.
\end{proof}

Now we determine the value of the following exponential sums used to obtain the weight distribution of $\C_D$.
\begin{lemma}\label{value-scb}
Let $b\in \gf_{q^m}^*$ with $m=2s$. Denote by
$$S_c(b)=\sum_{x\in \gf_{q^m}^*}\sum_{y\in \gf_{q}^*}\sum_{z\in \gf_q^*}\varphi(zc)\varphi\left(y\tr_{q^m/q}(bx)+z\tr_{q^s/q}(x^{q^s+1})\right),\ c\in\gf_q.$$
If $c=0$ and $q$ is any prime power, then
\begin{eqnarray*}
S_0(b)&=&\begin{cases}
-(q-1)^2(q^s+1)& \text{ if $\tr_{q^s/q}(b^{q^s+1})=0$},\\
(q-1)(q^s-q+1)& \text{ if $\tr_{q^s/q}(b^{q^s+1})\neq 0$}.
\end{cases}
\end{eqnarray*}
If $c\in \gf_q^*$ and $q$ is even, then
\begin{eqnarray*}
S_c(b)&=&\begin{cases}
(q^s+1)(q-1)& \text{ if $\tr_{q^s/q}(b^{q^s+1})=0$},\\
q-1-q^s& \text{ if $\tr_{q^s/q}(b^{q^s+1})\neq 0$}.
\end{cases}
\end{eqnarray*}
If $c\in \gf_q^*$ and $q=p^e$ is odd with $\eta$ the quadratic character of $\gf_q^*$,
$$S_c(b)=\begin{cases}
(q^s+1)(q-1)& \text{ if $\tr_{q^s/q}(b^{q^s+1})=0$},\\
(q-1)-q^s\left(1+(-1)^{(\frac{p-1}{2})^2e}q\eta\left(-c\tr_{q^s/q}\left(b^{q^s+1}\right)\right)\right)
& \text{ if $\tr_{q^s/q}(b^{q^s+1})\neq 0$}.\\
\end{cases}$$
\end{lemma}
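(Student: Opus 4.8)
The plan is to evaluate $S_c(b)$ by swapping the order of summation so that the innermost sum is over $x\in\gf_{q^m}^*$, thereby reducing everything to the exponential sum $\Delta(a,b',c)$ already computed in Lemma \ref{lem-sums}. First I would rewrite
$$S_c(b)=\sum_{z\in\gf_q^*}\varphi(zc)\sum_{y\in\gf_q^*}\sum_{x\in\gf_{q^m}^*}\varphi\!\left(z\tr_{q^s/q}(x^{q^s+1})+y\tr_{q^m/q}(bx)\right).$$
For fixed $z\in\gf_q^*$, substitute $x\mapsto y^{-1}z\,x$ inside the inner double sum; since $z\in\gf_q\subseteq\gf_{q^s}$, one has $\tr_{q^s/q}((y^{-1}zx)^{q^s+1})=y^{-2}z^2\tr_{q^s/q}(x^{q^s+1})$ and $\tr_{q^m/q}(b\,y^{-1}zx)=y^{-1}z\tr_{q^m/q}(bx)$. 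This does not immediately clean things up, so instead I would keep $z$ fixed, factor $z$ out of the quadratic trace by writing $z\tr_{q^s/q}(x^{q^s+1})=\tr_{q^s/q}(zx^{q^s+1})$, and absorb $z$ via a change of variable $x\mapsto z^{-1/(q^s+1)}x$ only when such a root exists; the cleaner route is to simply recognize that $\sum_{y\in\gf_q^*}\sum_{x}\varphi(z\tr_{q^s/q}(x^{q^s+1})+y\tr_{q^m/q}(bx))$ is, up to the outer factor $\varphi(zc)$ and a rescaling of $c$, exactly $\Delta(z,b,0)$-type data. Concretely, pulling the $y$-sum outside and setting $a=z$ (legal since $z\in\gf_{q^s}^*$), the expression $\sum_{z\in\gf_q^*}\varphi(zc)\sum_{y\in\gf_q^*}\sum_{x\in\gf_{q^m}^*}\varphi(y(\tr_{q^s/q}(z y^{-1}\cdots)))$ can be matched to $\Delta$ after the substitution $x\mapsto yx$, which turns $z\tr_{q^s/q}(x^{q^s+1})$ into $z y^{-2}\tr_{q^s/q}((yx)^{q^s+1})$; the dependence on $y$ then only sits in the combination $z/y^2$ (for the quadratic part) but $y$ (for the linear part), and summing over $y$ is what produces the Gauss-sum factor in the odd-$q$ case.

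The key steps, in order, are: (1) interchange sums to expose the $x$-sum; (2) for each fixed $z\in\gf_q^*$, perform the change of variable $x\mapsto y x$ (or $x\mapsto \lambda x$ with $\lambda\in\gf_{q^s}^*$ chosen to normalize the coefficient of the quadratic term), reducing the inner $\sum_{y,x}$ to a known quantity depending only on whether $\tr_{q^s/q}(b^{q^s+1}/z)=\tr_{q^s/q}(b^{q^s+1})\cdot z^{-1}$ is zero — note $\tr_{q^s/q}(b^{q^s+1}/z)=0$ if and only if $\tr_{q^s/q}(b^{q^s+1})=0$ since $z\in\gf_q^*$ — which is the single dichotomy appearing in all three cases of the statement; (3) invoke Lemma \ref{lem-sums} to substitute the explicit values of the inner sum, which are constants times $(q^s+1)$ in the zero case; (4) carry out the remaining sum $\sum_{z\in\gf_q^*}\varphi(zc)(\cdots)$. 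In the case $c=0$ this last sum is trivial (the summand is constant in $z$, giving the factor $q-1$); in the case $c\in\gf_q^*$ with $q$ even, Lemma \ref{lem-charactersum-evenq} or a direct orthogonality argument collapses it; in the case $c\in\gf_q^*$ with $q$ odd, the residual $z$-sum is a quadratic character sum evaluated by Lemma \ref{lem-charactersum} together with the quadratic Gauss sum formula of Lemma \ref{quadGuasssum}, which is precisely the source of the factor $(-1)^{(\frac{p-1}{2})^2e}q\,\eta(-c\tr_{q^s/q}(b^{q^s+1}))$.

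I expect the main obstacle to be bookkeeping the rescaling in step (2) correctly: because the quadratic trace term scales as $\lambda^{q^s+1}$ while the linear trace term scales as $\lambda$, one cannot simultaneously normalize both, and the surviving coupling between $y$ and $z$ must be organized so that exactly one clean parameter ($c$ rescaled by a unit, or the product $c\cdot\tr_{q^s/q}(b^{q^s+1})$) controls the answer. In the odd-$q$ case this requires care to ensure the quadratic form in $z$ fed into Lemma \ref{lem-charactersum} has the right discriminant, so that $\eta$ is evaluated at $-c\tr_{q^s/q}(b^{q^s+1})$ and not at some associate differing by a square; tracking the constant $(-1)^{e-1}$ from Lemma \ref{quadGuasssum} against the sign conventions is the delicate part. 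The even-$q$ subcase is by comparison routine, since $\tr_{q^s/q}(x^{q^s+1})$ is additive-linear in a suitable sense there and Lemma \ref{lem-charactersum-evenq} applies directly.
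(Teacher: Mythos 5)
Your overall plan---reducing $S_c(b)$ to the sum $\Delta$ of Lemma \ref{lem-sums}---is genuinely different from the paper's proof, which never invokes Lemma \ref{lem-sums} at this point: the paper completes the square in $x$ directly, uses $\sum_{x\in\gf_{q^m}}\varphi(\tr_{q^s/q}(zx^{q^s+1}))=-q^s$, and then evaluates a quadratic character sum in $y$ (via Lemmas \ref{lem-charactersum-evenq} and \ref{lem-charactersum} with \ref{quadGuasssum}) followed by a multiplicative character sum in $z$. Your route can actually be made to work, and more cleanly than you describe: for each fixed $z$ substitute $y=zw$, so that
$$S_c(b)=\sum_{w\in\gf_q^*}\sum_{z\in\gf_q^*}\varphi(zc)\sum_{x\in\gf_{q^m}^*}\varphi\left(z\left(\tr_{q^s/q}(x^{q^s+1})+\tr_{q^m/q}(wbx)\right)\right)=\sum_{w\in\gf_q^*}\Delta(1,wb,c),$$
and since $\tr_{q^s/q}\bigl((wb)^{q^s+1}\bigr)=w^2\tr_{q^s/q}(b^{q^s+1})$ for $w\in\gf_q^*$, all three cases follow from Lemma \ref{lem-sums} by counting the $w$ with $w^2\tr_{q^s/q}(b^{q^s+1})=c$. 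None of the substitutions you actually write down ($x\mapsto y^{-1}zx$, $x\mapsto yx$, a normalization by a $(q^s+1)$-th root) achieves this matching, and you concede as much.

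The genuine gap is your step (2): it is false that the relevant inner sums depend only on whether $\tr_{q^s/q}(b^{q^s+1})$ vanishes. Lemma \ref{lem-sums} has four cases, and for $c\in\gf_q^*$ the essential distinction is whether $c$ equals $\tr_{q^s/q}\bigl((wb)^{q^s+1}\bigr)=w^2\tr_{q^s/q}(b^{q^s+1})$: the exceptional value $q^s+1-q^{s+1}$ occurs exactly for those $w$ with $w^2=c/\tr_{q^s/q}(b^{q^s+1})$, of which there is exactly one when $q$ is even and $1+\eta\bigl(c\,\tr_{q^s/q}(b^{q^s+1})\bigr)$ when $q$ is odd. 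If you track only the dichotomy you state, every $c\neq 0$ case collapses to $(q-1)(q^s+1)$ and the $-q^{s+1}$ correction---hence the entire second line of both the even and odd cases of the lemma---is lost. Relatedly, your step (4) attributes the factor $(-1)^{(\frac{p-1}{2})^2e}q\,\eta\bigl(-c\tr_{q^s/q}(b^{q^s+1})\bigr)$ to a Gauss sum over $z$; in the $\Delta$-based route that factor comes instead from the solution count $1+\eta(-1)\eta\bigl(-c\tr_{q^s/q}(b^{q^s+1})\bigr)$ together with $\eta(-1)=(-1)^{(\frac{p-1}{2})^2e}$, while the Gauss-sum mechanism you describe belongs to the paper's quite different computation. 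As written, the proposal therefore does not produce the stated values for $c\in\gf_q^*$.
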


\begin{proof}
Recall that $m=2s$. By the transitivity of trace functions, we have
\begin{eqnarray*}
S_c(b)&=&\sum_{y\in \gf_{q}^*}\sum_{z\in \gf_q^*}\varphi(zc)\sum_{x\in \gf_{q^m}^*}\varphi\left(z\left(\tr_{q^s/q}(x^{q^s+1})+\tr_{q^m/q}\left(\frac{y}{z}bx\right)\right)\right)\\
&=&\sum_{y\in \gf_{q}^*}\sum_{z\in \gf_q^*}\varphi(zc)\sum_{x\in \gf_{q^m}^*}\varphi\left(z\left(\tr_{q^s/q}(x^{q^s+1})+\tr_{q^s/q}\left(\tr_{q^m/q^s}\left(\frac{y}{z}bx\right)\right)\right)\right)\\
&=& \sum_{y\in \gf_{q}^*}\sum_{z\in \gf_q^*}\varphi(zc)\sum_{x\in \gf_{q^m}^*}\varphi\left(z\left(\tr_{q^s/q}\left(x^{q^s+1}+\frac{ybx}{z}+\frac{yb^{q^s}x^{q^s}}{z}\right)\right)\right).
\end{eqnarray*}
Note that
\begin{eqnarray*}
x^{q^s+1}+\frac{ybx}{z}+\frac{yb^{q^s}x^{q^s}}{z}=\left(x+\frac{yb^{q^s}}{z}\right)^{q^s+1}-\frac{y^2b^{q^s+1}}{z^2}
\end{eqnarray*}
as $y,z\in \gf_q^*$. Hence
\begin{eqnarray*}
S_c(b)&=&\sum_{y\in \gf_{q}^*}\sum_{z\in \gf_q^*}\varphi(zc)\sum_{x\in \gf_{q^m}^*}\varphi\left(z\left(\tr_{q^s/q}\left(\left(x+\frac{yb^{q^s}}{z}\right)^{q^s+1}-\frac{y^2b^{q^s+1}}{z^2}\right)\right)\right)\\
&=&\sum_{y\in \gf_{q}^*}\sum_{z\in \gf_q^*}\varphi\left(zc-\tr_{q^s/q}\left(\frac{y^2b^{q^s+1}}{z}\right)\right)\sum_{x\in \gf_{q^m}^*}\varphi\left(z\tr_{q^s/q}\left(x+\frac{yb^{q^s}}{z}\right)^{q^s+1}\right)\\
&=&-\sum_{y\in \gf_{q}^*}\sum_{z\in \gf_q^*}\varphi\left(zc-\tr_{q^s/q}\left(\frac{y^2b^{q^s+1}}{z}\right)\right)\varphi\left(\tr_{q^s/q}\left(\frac{y^2b^{q^s+1}}{z}\right)\right)\\
& &+\sum_{y\in \gf_{q}^*}\sum_{z\in \gf_q^*}\varphi\left(zc-\tr_{q^s/q}\left(\frac{y^2b^{q^s+1}}{z}\right)\right)\sum_{x\in \gf_{q^m}}\varphi\left(z\tr_{q^s/q}\left(\left(x+\frac{yb^{q^s}}{z}\right)^{q^s+1}\right)\right)\\
&=&-\sum_{y\in \gf_{q}^*}\sum_{z\in \gf_q^*}\varphi(zc)+\sum_{y\in \gf_{q}^*}\sum_{z\in \gf_q^*}\varphi\left(zc-\tr_{q^s/q}\left(\frac{y^2b^{q^s+1}}{z}\right)\right)\sum_{x'\in \gf_{q^m}}\varphi\left(\tr_{q^s/q}(zx'^{q^s+1})\right)\\
&=&-\sum_{y\in \gf_{q}^*}\sum_{z\in \gf_q^*}\varphi(zc)+\sum_{y\in \gf_{q}^*}\sum_{z\in \gf_q^*}\varphi\left(zc-\tr_{q^s/q}\left(\frac{y^2b^{q^s+1}}{z}\right)\right)\sum_{x'\in \gf_{q^m}}\phi\left(zx'^{q^s+1})\right),
\end{eqnarray*}
where we used the substitution $x+\frac{yb^{q^s}}{z}\mapsto x'$ in the forth equality and $\phi$ denotes the canonical additive character of $\gf_{q^s}$. Since
\begin{eqnarray*}
\sum_{x'\in \gf_{q^m}}\phi\left(zx'^{q^s+1})\right)= 1+\sum_{x'\in \gf_{q^m}^*}\phi\left(zx'^{q^s+1})\right)=1+(q^s+1)\sum_{x'\in \gf_{q^s}^*}\phi\left(zx'\right)=-q^s
\end{eqnarray*}
for $z\in \gf_q^*$, we have
\begin{eqnarray}\label{eqn-Scb}
\nonumber S_c(b)&=&-\sum_{y\in \gf_{q}^*}\sum_{z\in \gf_q^*}\varphi(zc)-q^s\sum_{y\in \gf_{q}^*}\sum_{z\in \gf_q^*}\varphi\left(zc-\tr_{q^s/q}\left(\frac{y^2b^{q^s+1}}{z}\right)\right)\\
\nonumber &=&-\sum_{y\in \gf_{q}^*}\sum_{z\in \gf_q^*}\varphi(zc)-q^s\sum_{z\in \gf_q^*}\varphi(zc)\sum_{y\in \gf_{q}^*}\varphi\left(-\tr_{q^s/q}\left(\frac{b^{q^s+1}}{z}\right)y^2\right)\\
&=&-\sum_{y\in \gf_{q}^*}\sum_{z\in \gf_q^*}\varphi(zc)+q^s\sum_{z\in \gf_q^*}\varphi(zc)-q^s\sum_{z\in \gf_q^*}\varphi(zc)\sum_{y\in \gf_{q}}\varphi\left(-\tr_{q^s/q}\left(\frac{b^{q^s+1}}{z}\right)y^2\right).
\end{eqnarray}
Now we determine the value of $S_c(b)$ in the following two cases:
\begin{enumerate}
\item Let $q$ be even. By Lemma \ref{lem-charactersum-evenq},
$$\sum_{y\in \gf_{q}}\varphi\left(-\tr_{q^s/q}\left(\frac{b^{q^s+1}}{z}\right)y^2\right)=\begin{cases}
q & \text{ if $\tr_{q^s/q}(b^{q^s+1})=0$},\\
0 & \text{ if $\tr_{q^s/q}(b^{q^s+1})\neq 0$}.
\end{cases}$$
By Equation (\ref{eqn-Scb}), we then have
\begin{eqnarray*}
S_c(b)&=&\begin{cases}
-\sum\limits_{y\in \gf_{q}^*}\sum\limits_{z\in \gf_q^*}\varphi(zc)+q^s\sum\limits_{z\in \gf_q^*}\varphi(zc)-q^{s+1}\sum\limits_{z\in \gf_q^*}\varphi(zc)& \text{ if $\tr_{q^s/q}(b^{q^s+1})=0$},\\
-\sum\limits_{y\in \gf_{q}^*}\sum\limits_{z\in \gf_q^*}\varphi(zc)+q^s\sum\limits_{z\in \gf_q^*}\varphi(zc) & \text{ if $\tr_{q^s/q}(b^{q^s+1})\neq 0$}.
\end{cases}
\end{eqnarray*}
Since
$$\sum\limits_{z\in \gf_q^*}\varphi(zc)=\begin{cases}
q-1& \text{ if $c=0$,}\\
-1& \text{ if $c\neq 0$}.
\end{cases}$$ the value of $S_c(b)$ follows.
\item Let $q$ be odd and $\eta$ the quadratic character of $\gf_q^*$. If $c\in \gf_q^*$, by Lemmas \ref{lem-charactersum} and \ref{quadGuasssum},
\begin{eqnarray*}
& &\sum_{z\in \gf_q^*}\varphi(zc)\sum_{y\in \gf_{q}}\varphi\left(-\tr_{q^s/q}\left(\frac{b^{q^s+1}}{z}\right)y^2\right)\\
&=&\begin{cases}
q\sum_{z\in \gf_q^*}\varphi(zc)& \text{ if $\tr_{q^s/q}(b^{q^s+1})=0$}\\
G(\eta,\varphi)\sum_{z\in \gf_q^*}\varphi(zc)\eta\left(-\tr_{q^s/q}\left(\frac{b^{q^s+1}}{z}\right)\right)& \text{ if $\tr_{q^s/q}(b^{q^s+1})\neq 0$}\\
\end{cases}\\
&=&\begin{cases}
-q& \text{ if $\tr_{q^s/q}(b^{q^s+1})=0$}\\
G(\eta,\varphi)\eta\left(-c\tr_{q^s/q}\left(b^{q^s+1}\right)\right)\sum_{z\in \gf_q^*}\varphi(zc)\eta(cz)& \text{ if $\tr_{q^s/q}(b^{q^s+1})\neq 0$}\\
\end{cases}\\
&=&\begin{cases}
-q& \text{ if $\tr_{q^s/q}(b^{q^s+1})=0$}\\
G(\eta,\varphi)^2\eta\left(-c\tr_{q^s/q}\left(b^{q^s+1}\right)\right)& \text{ if $\tr_{q^s/q}(b^{q^s+1})\neq 0$}\\
\end{cases}\\
&=&\begin{cases}
-q& \text{ if $\tr_{q^s/q}(b^{q^s+1})=0$},\\
(-1)^{(\frac{p-1}{2})^2e}q\eta\left(-c\tr_{q^s/q}\left(b^{q^s+1}\right)\right)& \text{ if $\tr_{q^s/q}(b^{q^s+1})\neq 0$}.\\
\end{cases}\\
\end{eqnarray*}
By Equation (\ref{eqn-Scb}), we then obtain
$$S_c(b)=\begin{cases}
(q^s+1)(q-1)& \text{ if $\tr_{q^s/q}(b^{q^s+1})=0$},\\
(q-1)-q^s\left(1+(-1)^{(\frac{p-1}{2})^2e}q\eta\left(-c\tr_{q^s/q}\left(b^{q^s+1}\right)\right)\right)
& \text{ if $\tr_{q^s/q}(b^{q^s+1})\neq 0$}.\\
\end{cases}$$
If $c=0$, by Lemmas \ref{lem-charactersum} and \ref{quadGuasssum},
\begin{eqnarray*}
\sum_{z\in \gf_q^*}\varphi(zc)\sum_{y\in \gf_{q}}\varphi\left(-\tr_{q^s/q}\left(\frac{b^{q^s+1}}{z}\right)y^2\right)&=&\begin{cases}
q(q-1)& \text{ if $\tr_{q^s/q}(b^{q^s+1})=0$}\\
G(\eta,\varphi)\sum_{z\in \gf_q^*}\eta\left(-\tr_{q^s/q}\left(\frac{b^{q^s+1}}{z}\right)\right)& \text{ if $\tr_{q^s/q}(b^{q^s+1})\neq 0$}\\
\end{cases}\\
&=&\begin{cases}
q(q-1)& \text{ if $\tr_{q^s/q}(b^{q^s+1})=0$}\\
G(\eta,\varphi)\sum_{z\in \gf_q^*}\eta\left(-\tr_{q^s/q}\left(b^{q^s+1}\right)z\right)& \text{ if $\tr_{q^s/q}(b^{q^s+1})\neq 0$}\\
\end{cases}\\
&=&\begin{cases}
q(q-1)& \text{ if $\tr_{q^s/q}(b^{q^s+1})=0$},\\
0& \text{ if $\tr_{q^s/q}(b^{q^s+1})\neq 0$},\\
\end{cases}\\
\end{eqnarray*}
where the second equation holds due to $\eta(z)=\eta(1/z)$ and the third one holds due to the orthogonal relation of multiplicative characters. By Equation (\ref{eqn-Scb}), we then obtain
$$S_0(b)=\begin{cases}
-(q^s+1)(q-1)^2& \text{ if $\tr_{q^s/q}(b^{q^s+1})=0$},\\
(q-1)(q^s-q+1)
& \text{ if $\tr_{q^s/q}(b^{q^s+1})\neq 0$}.\\
\end{cases}$$
\end{enumerate}
The proof is completed.
\end{proof}

In the following, we study the weight enumerator of $\C_{D}$.
\begin{theorem}\label{th-1}
Let $s$ be a positive integer and $m=2s$. If $c=0,s>1$ and $q$ is any prime power, then $\C_{D}$ is a $q$-ary $[(q^s+1)(q^{s-1}-1),m,(q^{2s-2}-q^{s-1})(q-1)]$ linear code with weight enumerator
$$1+(q^s+1)(q^{s}-q^{s-1})z^{(q^{2s-2}-q^{s-1})(q-1)}+(q^s+1)(q^{s-1}-1)z^{q^{2s-2}(q-1)}.$$ If $c\in \gf_q^*$ and $q$ is even, then $\C_{D}$ is a $q$-ary $[q^{s-1}(q^s+1),m,q^{2s-2}(q-1)]$ linear code with weight enumerator
$$1+(q^s+1)(q^{s-1}-1)z^{q^{2s-2}(q-1)}+(q^s+1)(q^{s}-q^{s-1})z^{q^{2s-2}(q-1)+q^{s-1}}.$$  If $c\in \gf_q^*$ and $q$ is odd, then $\C_{D}$ is a $q$-ary $[q^{s-1}(q^s+1),m,q^{2s-2}(q-1)]$ linear code with weight enumerator
$$1+\frac{(q^s+1)(q^s+q^{s-1}-2)}{2}z^{q^{2s-2}(q-1)}+\frac{q^{s-1}(q-1)(q^s+1)}{2}z^{q^{2s-2}(q-1)+2q^{s-1}}.$$
\end{theorem}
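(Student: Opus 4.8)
The plan is to evaluate, for each $b\in\gf_{q^m}$, the Hamming weight of the codeword $\bc_b=(\tr_{q^m/q}(bx))_{x\in D}$ by counting its zero coordinates, and then to record how often each value occurs. Put $N(b)=\#\{x\in\gf_{q^m}^*:\tr_{q^s/q}(x^{q^s+1})+c=0\text{ and }\tr_{q^m/q}(bx)=0\}$, so that $\wt(\bc_b)=n-N(b)$ with $n$ the length from Lemma~\ref{length}. Applying the orthogonality relation of additive characters twice,
$$N(b)=\frac{1}{q^2}\sum_{x\in\gf_{q^m}^*}\Big(\sum_{y\in\gf_q}\varphi\big(y\tr_{q^m/q}(bx)\big)\Big)\Big(\sum_{z\in\gf_q}\varphi\big(z(\tr_{q^s/q}(x^{q^s+1})+c)\big)\Big),$$
and I would split the double sum over $(y,z)$ into the four parts $y=z=0$, $y=0\neq z$, $y\neq0=z$, $y\neq0\neq z$. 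The part $y=z=0$ gives $(q^m-1)/q^2$; the part $y=0\neq z$ gives $\Delta(1,0,c)/q^2$ (so the sum of these two parts is $n/q$, by the identity $n=\tfrac{q^m-1}{q}+\tfrac1q\Delta(1,0,c)$ from the proof of Lemma~\ref{length}); the part $y\neq0=z$ gives $-(q-1)/q^2$ for $b\neq0$ by character orthogonality; and the part $y\neq0\neq z$ is exactly $S_c(b)/q^2$ with $S_c(b)$ the sum computed in Lemma~\ref{value-scb}. Collecting terms yields, for $b\neq0$,
$$\wt(\bc_b)=\frac{(q-1)n}{q}+\frac{q-1}{q^2}-\frac{S_c(b)}{q^2}.$$

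Next I would insert the value of $n$ from Lemma~\ref{length} and the three evaluations of $S_c(b)$ from Lemma~\ref{value-scb}. In each case $S_c(b)$ depends only on whether $\tr_{q^s/q}(b^{q^s+1})$ vanishes — and, for $q$ odd and $c\neq0$, additionally on the sign $\eta\!\left(-c\,\tr_{q^s/q}(b^{q^s+1})\right)$ — and a short simplification of the displayed identity produces precisely the two nonzero weights listed in the theorem for that case. Since every nonzero $b$ then gives a nonzero weight, the $\gf_q$-linear map $b\mapsto\bc_b$ is injective, so $\dim_{\gf_q}\C_D=m$, and the minimum distance is the smallest weight appearing in the enumerator.

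To obtain the frequencies I would count the pertinent $b$'s using two fibre facts: $b^{q^s+1}=\Norm_{q^m/q^s}(b)$ and the norm map $\gf_{q^m}^*\to\gf_{q^s}^*$ is surjective and $(q^s+1)$-to-one, while $\tr_{q^s/q}\colon\gf_{q^s}\to\gf_q$ is surjective $\gf_q$-linear with every fibre of size $q^{s-1}$. Hence $\#\{b\in\gf_{q^m}^*:\tr_{q^s/q}(b^{q^s+1})=0\}=(q^s+1)(q^{s-1}-1)$ and $\#\{b\in\gf_{q^m}^*:\tr_{q^s/q}(b^{q^s+1})=a\}=(q^s+1)q^{s-1}$ for each $a\in\gf_q^*$, which settles the cases $c=0$ (where the hypothesis $s>1$ is used only to guarantee $D\neq\emptyset$) and $q$ even with $c\neq0$. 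For $q$ odd and $c\neq0$ I would further subdivide the $b$'s with $\tr_{q^s/q}(b^{q^s+1})\neq0$ by the value of $\eta\!\left(-c\,\tr_{q^s/q}(b^{q^s+1})\right)$: as $b$ runs over this set the element $-c\,\tr_{q^s/q}(b^{q^s+1})$ runs over $\gf_q^*$ attaining each value $(q^s+1)q^{s-1}$ times, and exactly $(q-1)/2$ elements of $\gf_q^*$ carry any prescribed $\eta$-value, so the larger weight occurs for $(q^s+1)q^{s-1}(q-1)/2$ of the $b$'s and the smaller one for the remaining $q^m-1$ minus this number.

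The algebraic simplifications of $\wt(\bc_b)$ in the second step are mechanical. The only slightly delicate point is the odd-$q$ case: one must correctly pair the $\pm$ sign in $S_c(b)$ — which carries the factor $(-1)^{((p-1)/2)^2 e}$ — with the right weight, and then observe that the frequency of the larger weight is insensitive to that factor because $\eta$ is equidistributed on $\gf_q^*$; granted this, the two fibre counts above complete the weight distribution.
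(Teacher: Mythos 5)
Your argument is correct and follows the paper's own proof almost exactly: the same four-way split of the double character sum (reusing $\Delta(1,0,c)$ from Lemma~\ref{lem-sums} and $S_c(b)$ from Lemma~\ref{value-scb}), the same simplification to the two weights in each case, and the same fibre count through the $(q^s+1)$-to-one norm map and the $q^{s-1}$-element trace fibres for the multiplicities when $c=0$ and when $c\in\gf_q^*$ with $q$ even. The one place you diverge is the case $q$ odd, $c\in\gf_q^*$: you determine the two frequencies directly by observing that $-c\,\tr_{q^s/q}(b^{q^s+1})$ is equidistributed over $\gf_q^*$ with multiplicity $(q^s+1)q^{s-1}$ and that $\eta$ takes each sign on exactly $(q-1)/2$ elements (so the count is independent of the Gauss-sum sign $(-1)^{((p-1)/2)^2e}$), whereas the paper sidesteps this bookkeeping by feeding the known length, dimension and the two weight values into the first two Pless power moments and solving the resulting $2\times 2$ linear system. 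Both are valid; your route is more self-contained and explains \emph{why} the split is exactly half-and-half on the nonzero-trace classes, while the paper's route avoids tracking the quadratic character entirely. Your parenthetical that $s>1$ is ``only'' needed for $D\neq\emptyset$ when $c=0$ is harmless but slightly understated: it is equally what makes the smaller weight $q^{s-1}(q^{s-1}-1)(q-1)$ strictly positive, which is what your injectivity argument for $\dim\C_D=m$ actually uses.
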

\begin{proof}
 For any codeword $$\bc(b)=\left(\tr_{q^m/q}(bd_1),\tr_{q^m/q}(bd_2),\cdots,\tr_{q^m/q}(bd_n)\right)\in \C_D$$ with $b\in \gf_{q^m}^*$, by the orthogonal relation of additive characters, its Hamming weight
 \begin{eqnarray}\label{eqn-1}
 \nonumber \wt(\bc(b))&=&n-\sharp \left\{x\in \gf_{q^m}^*: \tr_{q^m/q}(bx)=0\mbox{ and } \tr_{q^s/q}(x^{q^s+1})+c=0\right\}\\
 \nonumber &=&n-\frac{1}{q^2}\sum_{x\in \gf_{q^m}^*}\sum_{y\in \gf_{q}}\zeta_{p}^{y\tr_{q^m/q}(bx)}\sum_{z\in \gf_q}\zeta_{p}^{z\left(\tr_{q^s/q}(x^{q^s+1})+c\right)}\\
 \nonumber &=&n-\frac{1}{q^2}\sum_{x\in \gf_{q^m}^*}\sum_{y\in \gf_{q}}\sum_{z\in \gf_q}\varphi(zc)\varphi\left(y\tr_{q^m/q}(bx)+z\tr_{q^s/q}(x^{q^s+1})\right)\\
 \nonumber &=&n-\frac{q^m-1}{q^2}-\frac{1}{q^2}\sum_{y\in \gf_{q}^*}\sum_{x\in \gf_{q^m}^*}\varphi\left(y\tr_{q^m/q}(bx)\right)-\frac{1}{q^2}\sum_{z\in \gf_{q}^*}\sum_{x\in \gf_{q^m}^*}\varphi(zc)\varphi\left(z\tr_{q^s/q}(x^{q^s+1})\right)\\
\nonumber & &-\frac{1}{q^2}\sum_{x\in \gf_{q^m}^*}\sum_{y\in \gf_{q}^*}\sum_{z\in \gf_q^*}\varphi(zc)\varphi\left(y\tr_{q^m/q}(bx)+z\tr_{q^s/q}(x^{q^s+1})\right)\\
\nonumber &=&n-\frac{q^m-1}{q^2}-\frac{1}{q^2}\sum_{y\in \gf_{q}^*}\sum_{x\in \gf_{q^m}^*}\varphi\left(y\tr_{q^m/q}(bx)\right)-\frac{1}{q^2}\sum_{z\in \gf_{q}^*}\sum_{x\in \gf_{q^m}^*}\varphi(zc)\varphi\left(z\tr_{q^s/q}(x^{q^s+1})\right)-\frac{1}{q^2}S_c(b),
 \end{eqnarray}
 where $S_c(b)$ was defined in Lemma \ref{value-scb}.
 Let $\chi$ be the canonical additive character of $\gf_{q^m}$. By the transitivity of trace functions,
 \begin{eqnarray}\label{eqn-2}
 \sum_{y\in \gf_{q}^*}\sum_{x\in \gf_{q^m}^*}\varphi\left(y\tr_{q^m/q}(bx)\right)= \sum_{y\in \gf_{q}^*}\sum_{x\in \gf_{q^m}^*}\chi(ybx)=-(q-1).
 \end{eqnarray}
By Lemma \ref{lem-sums},
\begin{eqnarray}\label{eqn-3}
 \nonumber \sum_{z\in \gf_{q}^*}\sum_{x\in \gf_{q^m}^*}\varphi(zc)\varphi\left(z\tr_{q^s/q}(x^{q^s+1})\right)=\Delta(1,0,c)=\left\{\begin{array}{ll}
(q^s+1)(1-q) & \text{ if }c=0,\\
q^s+1 & \text{ if }c\in \gf_q^*.
\end{array}\right.
 \end{eqnarray}
 Now we discuss the Hamming weight in several cases as follows.
 \begin{enumerate}
 \item\label{case-1} Let $c=0$. By Equations (\ref{eqn-1}), (\ref{eqn-2}), (\ref{eqn-3}) and Lemmas \ref{length} and \ref{value-scb}, we directly have
 \begin{eqnarray*}
 \wt(\bc(b))&=&\begin{cases}
q^{2s-2}(q-1)& \text{ if $\tr_{q^s/q}(b^{q^s+1})=0$},\\
(q^{2s-2}-q^{s-1})(q-1)& \text{ if $\tr_{q^s/q}(b^{q^s+1})\neq 0$}.
\end{cases}
 \end{eqnarray*}
 For $b\in \gf_{q^m}^*$, $\tr_{q^s/q}\left(b^{q^s+1}\right)=0\Leftrightarrow b^{q^s+1}\in \ker(\tr_{q^s/q})\backslash\{0\}.$
For fixed $b'\in \ker(\tr_{q/p})\backslash\{0\}$, there exist $q^s+1$ elements $b$ such that $b^{q^s+1}=b'$. Hence
$A_{q^{2s-2}(q-1)} =(q^s+1)(q^{s-1}-1)$ and
$$A_{(q^{2s-2}-q^{s-1})(q-1)}=q^{2s}-1-A_{q^{2s-2}(q-1)}=(q^s+1)(q^{s}-q^{s-1}).$$
Then the weight distribution follows in this case.
\item Let $c\in \gf_q^*$ and $q$ be even.By Equations (\ref{eqn-1}), (\ref{eqn-2}), (\ref{eqn-3}) and Lemmas \ref{length} and \ref{value-scb}, we directly have
 \begin{eqnarray*}
 \wt(\bc(b))&=&\begin{cases}
q^{2s-2}(q-1)& \text{ if $\tr_{q^s/q}(b^{q^s+1})=0$},\\
q^{2s-2}(q-1)+q^{s-1}& \text{ if $\tr_{q^s/q}(b^{q^s+1})\neq 0$}.
\end{cases}
\end{eqnarray*}
Similarly to Case \ref{case-1}), we have $A_{q^{2s-2}(q-1)} =(q^s+1)(q^{s-1}-1)$ and $A_{q^{2s-2}(q-1)+q^{s-1}}=(q^s+1)(q^{s}-q^{s-1})$. Then the weight distribution follows in this case.
\item Let $c\in \gf_q^*$ and $q=p^e$ be odd. By Equations (\ref{eqn-1}), (\ref{eqn-2}), (\ref{eqn-3}) and Lemmas \ref{length} and \ref{value-scb}, we directly derive
$$\wt(\bc(b))=\begin{cases}
q^{2s-2}(q-1)& \text{ if $\tr_{q^s/q}(b^{q^s+1})=0$},\\
q^{2s-2}(q-1)\text{ or }q^{2s-2}(q-1)+2q^{s-1}
& \text{ if $\tr_{q^s/q}(b^{q^s+1})\neq 0$}.\\
\end{cases}$$
The dimension of $\C_D$ is $m$ as $A_0=1$. Since $0\in D$, the minimal distance $d^{\perp}$ of the dual of $\C_D$ can't be 1. Then $d^{\perp}\geq 2$.
By the first two Pless power moments, we have
\begin{eqnarray*}\begin{cases}
A_{q^{2s-2}(q-1)}+A_{q^{2s-2}(q-1)+2q^{s-1}}=q^{m}-1, \\
q^{2s-2}(q-1)A_{q^{2s-2}(q-1)}+(q^{2s-2}(q-1)+2q^{s-1})A_{q^{2s-2}(q-1)+2q^{s-1}}=q^{m-1}(qn-n).
\end{cases}
\end{eqnarray*}
Solving this equation system yields $A_{q^{2s-2}(q-1)}=\frac{(q^s+1)(q^s+q^{s-1}-2)}{2}$ and
$A_{q^{2s-2}(q-1)+2q^{s-1}}=\frac{q^{s-1}(q-1)(q^s+1)}{2}$. Then the weight distribution follows in this case.
 \end{enumerate}
 The proof is completed.
\end{proof}

Examples of the linear code $\C_D$ are summarized in Table \ref{tab-1}. Some of them are optimal or almost optimal and some of them have the same parameters as the best linear codes according to the Code Tables at https://www.codetables.de/.
\begin{table}[ht]
\begin{center}
\caption{Examples of $\C_D$ in Theorem \ref{th-1}}\label{tab-1}
\begin{tabular}{ccccccc} \hline
Length &  Dimension & Minimal distance&  $c$& $m$ & $q$ & Optimality   \\ \hline
5 & 4 & 2 & 0 & 4 & 2 & Optimal\\
27 & 6 & 12 &0 & 6 & 2 & Optimal\\
51 & 4 & 36 & 0 & 4 & 4 & Best known\\
5 & 2 & 4 & 1 & 2 & 4 & Optimal\\
64 & 4 & 48 & 1 & 4 & 4 & Optimal\\
9 & 2 & 8 & 1 & 2 & 8 & Optimal\\
10 & 2 & 8 & 1 & 2 & 9 & Almost Optimal\\
30 & 4 & 18 & 1 & 4 & 3 & Almost optimal\\
\hline
\end{tabular}
\end{center}
\end{table}

Denote by $(1,A_1^{\perp},A_2^{\perp},\cdots,A_n^{\perp})$ the weight distribution of $\C_{D}^{\perp}$. There exists a relationship called Pless power moments between $(1,A_1,A_2,\cdots,A_n)$ and $(1,A_1^{\perp},A_2^{\perp},\cdots,A_n^{\perp})$ \cite[Page 259]{HP}. Pless power moments are useful for determining the minimal distance of $\C_{D}^{\perp}$.
\begin{theorem}\label{th-2}
Let $m=2s$. If $q=2,s\geq 3$ and $c=0$, then $\C_{D}^{\perp}$ is a $[(q^s+1)(q^{s-1}-1),(q^s+1)(q^{s-1}-1)-m,3]$ linear code. If $q>2,s>1$ and $c=0$, then $\C_{D}^{\perp}$ is a $[(q^s+1)(q^{s-1}-1),(q^s+1)(q^{s-1}-1)-m,2]$ linear code. If $q$ is even and $c\in \gf_q^*$, then $\C_{D}^{\perp}$ is a $[q^{s-1}(q^s+1),q^{s-1}(q^s+1)-m,3]$ linear code. If $q$ is odd and $c\in \gf_q^*$, then $\C_{D}^{\perp}$ is a $[q^{s-1}(q^s+1),q^{s-1}(q^s+1)-m,2]$ linear code.
\end{theorem}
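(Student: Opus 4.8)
The plan is to read off the length and dimension from Theorem \ref{th-1} and Lemma \ref{length}, then determine $d^{\perp}$ from the column structure of a generator matrix of $\C_D$ together with the known weight distribution of $\C_D$ and the Pless power moments. First I would record the basic facts: by Theorem \ref{th-1} the code $\C_D$ has dimension $m$, so $\C_D^{\perp}$ has length $n$ (as in Lemma \ref{length}) and dimension $n-m$, which already gives the claimed length and dimension in all four cases. Identifying $\gf_{q^m}$ with $\gf_q^m$, the elements $d_1,\dots,d_n$ of $D$ are the columns of a generator matrix of $\C_D$, and since the trace form is nondegenerate,
\[
\C_D^{\perp}=\Bigl\{\ba=(a_1,\dots,a_n)\in\gf_q^n:\ \textstyle\sum_{i=1}^n a_i d_i=0\Bigr\}.
\]
Hence a word of weight $w$ in $\C_D^{\perp}$ corresponds to $w$ pairwise distinct elements of $D$ that are $\gf_q$-linearly dependent with all coefficients nonzero. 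Because $D\subseteq\gf_{q^m}^{*}$ has no zero element, $A_1^{\perp}=0$ and therefore $d^{\perp}\ge 2$ in every case.

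Second, I would decide when $d^{\perp}=2$. A weight-$2$ dual word exists iff there are distinct $x,\lambda x\in D$ with $\lambda\in\gf_q^{*}\setminus\{1\}$. For $x\in D$ and $\lambda\in\gf_q^{*}$ one has $(\lambda x)^{q^s+1}=\lambda^{2}x^{q^s+1}$ and $\tr_{q^s/q}(\lambda^{2}x^{q^s+1})=\lambda^{2}\tr_{q^s/q}(x^{q^s+1})=-\lambda^{2}c$, so $\lambda x\in D\iff(\lambda^{2}-1)c=0$. Consequently: if $c=0$, then $\gf_q^{*}x\subseteq D$ whenever $x\in D$, so, using $D\neq\emptyset$ for $s>1$ (Lemma \ref{length}), the existence of some $\lambda\neq 1$ when $q>2$ yields $d^{\perp}=2$; if $c\in\gf_q^{*}$ and $q$ is odd, then $\lambda=-1\neq 1$ works, so $-x\in D$ and $d^{\perp}=2$. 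In the two remaining cases, $c=0$ with $q=2$, or $c\in\gf_q^{*}$ with $q$ even, the equation $(\lambda^{2}-1)c=0$ forces $\lambda=1$ in characteristic $2$, so no such pair exists, $A_2^{\perp}=0$, and $d^{\perp}\ge 3$.

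Third, for those two cases I would invoke the Pless power moments \cite[p.~259]{HP}. Since the weight distribution of $\C_D$ is fully known from Theorem \ref{th-1}, $\dim\C_D^{\perp}=n-m$, and $A_1^{\perp}=A_2^{\perp}=0$, the third Pless power moment becomes a single linear equation whose only unknown is $A_3^{\perp}$; solving it and checking that the resulting value is strictly positive gives $d^{\perp}=3$. This is also where the hypothesis $s\ge 3$ in the case $q=2$, $c=0$ is genuinely needed: for $s=2$ one has $\dim\C_D^{\perp}=1$ and the code degenerates (it becomes the binary repetition-type code of length $5$, with $d^{\perp}=5$), whereas for $s\ge 3$ there is enough room for a weight-$3$ dependence.

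I expect the main obstacle to be the bookkeeping in the third Pless power moment: one must substitute the two explicit nonzero weights and their frequencies from Theorem \ref{th-1} (which differ across the three regimes there), simplify, and verify positivity of $A_3^{\perp}$ in each relevant regime. A viable alternative that avoids this computation is to exhibit a weight-$3$ dual word directly, namely to show that there exist $x,y\in D$ and $\alpha,\beta\in\gf_q^{*}$ with $\alpha x+\beta y\in D$; expanding $\tr_{q^s/q}\bigl((\alpha x+\beta y)^{q^s+1}\bigr)$ via transitivity of the trace reduces this to the solvability of a trace equation in $\tr_{q^m/q}(xy^{q^s})$, whose solvability for some $x,y\in D$ can be checked by a short character-sum count, and this count again localizes precisely where $s\ge 3$ is required.
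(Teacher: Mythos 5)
Your proposal is correct, and it diverges from the paper in one substantive place. For the length and dimension, and for $d^{\perp}\ge 2$ (since $0\notin D$), you and the paper agree. The real difference is how you decide whether $d^{\perp}=2$: the paper computes $A_2^{\perp}$ exactly from the second Pless power moment using the weight distribution of Theorem \ref{th-1} (obtaining $A_2^{\perp}=\frac{(q-1)(q-2)(q^s+1)(q^{s-1}-1)}{2}$ when $c=0$ and $A_2^{\perp}=\frac{q^{s-1}(q-1)(q^s+1)}{2}$ when $c\in\gf_q^{*}$ with $q$ odd), whereas you observe directly that a weight-$2$ dual word is equivalent to a pair $x,\lambda x\in D$ with $\lambda\in\gf_q^{*}\setminus\{1\}$, and that $\lambda x\in D\iff(\lambda^{2}-1)c=0$. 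Your argument is correct (it relies on $(\lambda x)^{q^s+1}=\lambda^{2}x^{q^s+1}$ for $\lambda\in\gf_q$ and on $D\neq\emptyset$, which Lemma \ref{length} guarantees in the relevant cases), and it is arguably more illuminating: it explains \emph{why} projectivity occurs exactly when $\lambda^{2}=1$ forces $\lambda=1$, i.e.\ in characteristic $2$, rather than having this emerge from a moment identity; the paper's route buys the exact value of $A_2^{\perp}$ as a by-product. For the cases with $d^{\perp}\ge 3$ your plan coincides with the paper's: compute $A_3^{\perp}$ from the third Pless power moment and check positivity. You have only sketched this computation, so to make the proof complete you must actually carry it out; the paper's closed forms are $A_3^{\perp}=\frac{2^{3s-4}(2^s-3)-2^{s-2}(2^{s+1}-3)+1}{3}$ for $q=2$, $c=0$ (positive precisely for $s\ge 3$, vanishing at $s=2$, consistent with your observation that the dual degenerates to the $[5,1,5]$ repetition code there) and $A_3^{\perp}=\frac{q^{s-1}(q^s+1)(q-1)\left(q^{s-1}(2q-3)+q^{2s-2}(q-1)^2+2-q\right)}{6}>0$ for even $q$ and $c\in\gf_q^{*}$. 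Your suggested alternative of exhibiting a weight-$3$ dependence by a character-sum count is plausible but unnecessary once the moment computation is done.
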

\begin{proof}
Since $0\not\in D$, the minimal distance $d^{\perp}$ of the dual satisfies $d^{\perp}\geq 2$. By the second Pless power moment,
\begin{eqnarray}\label{eqn-pless-1}
\sum_{j=0}^{n}j^2A_j=q^{m-2}(q-1)n(qn-n+1)- 2q^{m-2}A_2^\perp.\
\end{eqnarray}
\begin{enumerate}
\item If $c=0$, by Theorem \ref{th-1} and Equation (\ref{eqn-pless-1}),
$A_2^{\perp}=\frac{(q-1)(q-2)(q^s+1)(q^{s-1}-1)}{2}$.
Hence $A_2^{\perp}=0$ if and only if $q=2$ as $s>1$. Let $q=2$, by the third Pless power moment,
\begin{eqnarray}\label{eqn-pless-2}
\sum_{j=0}^{n}j^3A_j=2^{2s-3}[n^3(n+3)-6A_3^{\perp}].
\end{eqnarray}
By Theorem \ref{th-1} and Equation (\ref{eqn-pless-2}),
$A_3^{\perp}=\frac{2^{3s-4}(2^s-3)-2^{s-2}(2^{s+1}-3)+1}{3}>0$ if $s\geq 3$ and $q=2$. Hence $d^{\perp}=3$ if $q=2,s\geq 3$ and $d^{\perp}=2$ if $q>2,s>1$.
\item If $c\in \gf_q^*$ and $q$ is even, by Theorem \ref{th-1} and Equation (\ref{eqn-pless-1}), we have $A_2^{\perp}=0$. Then $d^{\perp}\geq 3$. By the third Pless power moment,
    \begin{eqnarray}\label{eqn-pless-3}
    \sum_{j=0}^{n}j^3A_j=q^{2s-3}[(q-1)n(q^2n^2-2qn^2+3qn-q+n^2-3n+2)-6A_3^\perp].
    \end{eqnarray}
By Theorem \ref{th-1} and Equation (\ref{eqn-pless-3}),
$A_3^{\perp}=\frac{q^{s-1}(q^s+1)(q-1)\left(q^{s-1}(2q-3)+q^{2s-2}(q-1)^2+2-q\right)}{6}>0$ for $c\in \gf_q^*$ and even $q$. Then $d^{\perp}=3$.
\item If $c\in \gf_q^*$ and $q$ is odd, by Theorem \ref{th-1} and Equation (\ref{eqn-pless-1}), we have
$A_2^{\perp}=\frac{q^{s-1}(q-1)(q^s+1)}{2}>0$. Then $d^{\perp}=2$.
\end{enumerate}
The proof is completed.
\end{proof}
\section{Strongly regular graphs with new parameters}\label{section}
A connected graph of $N$ vertices is called \emph{strongly regular} with parameters $(N,K,\lambda,\mu)$ if it is regular with valency $K$ and if the number of vertices joined to two given vertices is $\lambda$ or $\mu$ according as the two given vertices are adjacent or non-adjacent.

 Let $G=[\textbf{y}_1,\textbf{y}_2,\cdots,\textbf{y}_n]$ be the generator matrix of a $[n,k]$ linear code $\C$ over $\gf_q$, where $\textbf{y}_i\in \gf_{q}^k$. Let $\textbf{V}=\gf_q^k$, $\textbf{O}=\{\langle\textbf{y}_i\rangle:i=1,2,\cdots,n\}$ and
$\Omega=\{\textbf{v}\in\textbf{V}:\langle \textbf{v}\rangle\in \textbf{O} \}$. Define a graph $G(\Omega)$  with vertices the vectors of $\textbf{V}$ and where two vertices are joint if and only if their difference is in $\Omega$. In \cite[Th. 3.2]{C}, it was proved that $G(\Omega)$ is strongly regular if and only if $\C$ is a projective two-weight code. Assume that the nonzero weights of $\C$ are $w_1$ and $w_2$. By \cite[Coro. 3.7]{C}, the parameters of $G(\Omega)$ are
$$N=q^k,\ K=n(q-1),\ \lambda=K^2+3K-q(w_1+w_2)-Kq(w_1+w_2)+q^2w_1w_2,\ \mu=K^2+K-Kq(w_1+w_2)+q^2w_1w_2.$$
By Theorems \ref{th-1} and \ref{th-2}, if $m=2s$, $q$ is even and $c\in \gf_q^*$, then $\C_D$ is a projective $[q^{s-1}(q^s+1),m,q^{2s-2}(q-1)]$ two-weight linear code with weight enumerator
$$1+(q^s+1)(q^{s-1}-1)z^{q^{2s-2}(q-1)}+(q^s+1)(q^{s}-q^{s-1})z^{q^{2s-2}(q-1)+q^{s-1}}.$$
Hence $\C_D$ yields a strongly regular graph $G(\Omega)$ with the following parameters:
\begin{eqnarray*}
N=q^{2s},\ K=q^{s-1}(q^s+1)(q-1),\ \lambda=q^{s-1}\left(2q-3+q^{s-1}(q-1)^2\right),\ \mu=(q-1)q^{s-1}(q^s-q^{s-1}+1).
\end{eqnarray*}
Compared with known  strongly regular graphs \cite{C}, $G(\Omega)$ has new parameters.

\section{Summary and concluding remarks}
The contributions of this paper are summarized as follows:
\begin{enumerate}
\item[$\bullet$] If $m=2s$, the weight distribution of $\C_D$ was determined in Theorem \ref{th-1} for any prime power $q$ and any $c\in \gf_q$;
\item[$\bullet$] The parameters of the dual of $\C_D$ was also determined in Theorem \ref{th-2};
\item[$\bullet$]  By Theorems \ref{th-1} and \ref{th-2}, $\C_D$ is a projective two-weight code if $q$ is even and $c\in \gf_q^*$. This class of projective two-weight codes yield strongly regular graphs with new parameters in Section \ref{section}.
\end{enumerate}
It is easy to prove that $\C_D$ in Theorem \ref{th-1} is minimal for $s\geq 3$ as $w_{\min}/w_{\max}>\frac{q-1}{q}$ \cite[Lemma 3]{YD}, where $w_{\min},w_{\max}$ denote the maximum and minimum weights of $\C_D$, respectively. Hence the code $\C_D$ in Theorem \ref{th-1} can also be used to construct secret sharing schemes with interesting access structures \cite{YD}. As another application, the projective two-weight code $\C_D$ for even $q$ and $c\in \gf_q^*$ holds $1$-design by the well-known Assmus-Mattson theorem \cite{Dingbook18}.

\end{document}